\DeclareMathOperator*{\argmin}{arg\,min }
\newtheorem{theorem}{Theorem}
\newtheorem{remark}{Remark}
\newcommand{\boxedthm}[1]{
\begin{tcolorbox}[colback=green!10,
                  colframe=green!10,
                  width=\linewidth,
                  arc=1mm, auto outer arc,
                  boxrule=0pt,
                  boxsep=-1mm,
                 ]
  #1
\end{tcolorbox}}
\definecolor{customgreen}{HTML}{D9FFD9}
\definecolor{customblue}{HTML}{D9D9FF}
\definecolor{customteal}{HTML}{BFDFDF}
\definecolor{customorange}{HTML}{FFDFBF}
\newcommand\blfootnote[1]{%
  \begingroup
  \renewcommand\thefootnote{}\footnotetext{#1}%
  \addtocounter{footnote}{-1}%
  \endgroup
}
\begin{document}

\title{UTOPY: Unrolling Algorithm Learning via Fidelity Homotopy for Inverse Problems \thanks{This work was supported by the Army Research Office/Laboratory under grant number W911NF-25-1-0165, VIE from UIS project 8087. The views and conclusions contained in this document are those of the authors and should not be interpreted as representing the official policies, either expressed or implied, of the Army Research Laboratory or the U.S. Government.}
\\
}

\author{Roman Jacome$^\dag$,  Romario Gualdr\'on-Hurtado$^\ddag$, Leon Suarez-Rodriguez$^\ddag$ and Henry Arguello$^\ddag$\\
{$^\dag${Department of Electrical, Electronics, and Telecommunications Engineering}} \\{$^\ddag${Department of Systems Engineering and Informatics}} \\
 {Universidad Industrial de Santander, Colombia, 680002}
 
}

\maketitle
\blfootnote{\footnotesize \textcopyright~ 2025 IEEE. Personal use of this material is permitted. Permission from IEEE must be obtained for all other uses, in any current or future media, including reprinting/republishing this material for advertising or promotional purposes, creating new collective works, for resale or redistribution to servers or lists, or reuse of any copyrighted component of this work in other works. }
\begin{abstract}

Imaging Inverse problems aim to reconstruct an underlying image from undersampled, coded, and noisy observations. Within the wide range of reconstruction frameworks, the unrolling algorithm is one of the most popular due to the synergistic integration of traditional model-based reconstruction methods and modern neural networks, providing an interpretable and highly accurate reconstruction. However, when the sensing operator is highly ill-posed, gradient steps on the data-fidelity term can hinder convergence and degrade reconstruction quality. To address this issue, we propose {UTOPY}, a homotopy continuation formulation for training the unrolling algorithm. Mainly, this method involves using a well-posed (synthetic) sensing matrix at the beginning of the unrolling network optimization. We define a continuation path strategy to transition smoothly from the synthetic fidelity to the desired ill-posed problem.  This strategy enables the network to progressively transition from a simpler, well-posed inverse problem to the more challenging target scenario. We theoretically show that, for projected gradient descent-like unrolling models, the proposed continuation strategy generates a smooth path of unrolling solutions. Experiments on compressive sensing and image deblurring demonstrate that our method consistently surpasses conventional unrolled training, achieving up to 2.5 dB PSNR improvement in reconstruction performance. Source code at \texttt{\url{https://github.com/romanjacome99/utopy}}

\end{abstract}

\begin{IEEEkeywords}
Inverse problems, Unrolling algorithm, homotopy optimization, compressed sensing, image deblurring. 
\end{IEEEkeywords}
\section{Introduction}
\vspace{-0.1cm}
Inverse problems involve reconstructing an unknown signal from noisy, corrupted, or typically undersampled observations, making the recovery process generally non-invertible and ill-posed. This work focuses on linear inverse problems, where a sensing matrix represents the forward model \cite{bertero1985linear}. Numerous imaging tasks rely on these principles, including image restoration—such as deblurring, denoising, inpainting, and super-resolution \cite{gualdron2024learning,gunturk2018image,yang2010image}—as well as compressed sensing (CS) \cite{cs} and medical imaging applications like magnetic resonance imaging (MRI) \cite{lustig2008compressed} or computed tomography (CT) \cite{willemink2013iterative}. See  \cite{IPDL,bai2020deep,bertero2021introduction} and references therein for more applications.

Algorithm unrolling offers an interpretable and effective framework for solving inverse problems by embedding neural network architectures within classical optimization procedures \cite{unrolling_review,gregor2010learning}. Unrolled models replace fixed iterative schemes with learned proximal operators and data-driven parameters, yielding improved accuracy and computational efficiency in applications such as image restoration \cite{mou2022deep}, MRI recovery \cite{aggarwal2018modl}, snapshot spectral imaging \cite{wang2019hyperspectral}, graph signal processing \cite{li2022graph}, and phase retrieval \cite{pinilla2023unfolding}, among others. Unrolling networks fall within the framework of learning-to-optimize \cite{chen2022learning}, which automates the design of optimization methods based on their performance on a set of training problems.  Traditional unrolling approaches optimize against a fixed data fidelity objective, leading to convergence difficulties with highly ill-posed measurement operators.

We propose UTOPY, a homotopy continuation formulation for unrolling algorithm optimization to address ill-posed data fidelity within the training. Homotopy continuation methods \cite{allgower2003introduction} offer a general strategy to mitigate such challenges by gradually transforming an easier optimization problem into the target problem. Related ideas appear in curriculum learning \cite{bengio2009curriculum} and progressive training for generative models \cite{karras2018progressive}, where tasks are ordered to improve training dynamics. Recent work on learned proximal models \cite{adler2018learned} focuses on regularization design but does not incorporate a progressive fidelity path. In inverse problems, $\ell_1$ -homotopy has been widely studied for CS, where a sequence of decreasing soft/hard thresholding parameters is used to approximate the sparse solution \cite{zhang2015simple,dong2017homotopy}.
Homotopy-based methods have also been explored in physics-informed neural networks \cite{zheng2023hompinns}, which leverage homotopy continuation to navigate multiple solution branches in nonlinear partial differential equations inverse problems. While conceptually related, such approaches have not yet been extended to algorithm unrolling for general inverse imaging problems.

Our method systematically transitions from a simpler, well-posed problem towards the challenging target fidelity, enhancing performance and generalization capability. This progressive strategy starts by learning the image prior for an easier inverse problem and by smoothly transitioning to the desired inverse problem, improving solution accuracy and computational robustness. We provide theoretical guarantees on the smooth transition of the different solutions in the continuation path. The experiments show the performance of the proposed method for single-pixel imaging and image deblurring, validating the proposed framework. Additionally, this training scheme also has generalization benefits to changes in the sensing model at testing time, as the model was trained with a linear combination of the synthetic inverse problem and the real one. Although the method was validated with the FISTA unrolling formulation, it can be extended to other approaches, such as ADMM or HQS formulations \cite{unrolling_review} and also deep equilibrium architectures \cite{gilton2021deep}. \vspace{-0.35cm}
\section{Background}

 Consider the inverse problem of reconstructing an image \(\mathbf{x} \in \mathbb{R}^n\) from noisy and undersampled measurement $\mathbf{y} = \mathbf{H}\mathbf{x} + \mathbf{e}\in\mathbb{R}^m
$ with \(m \leq n\), \(\mathbf{H} \in \mathbb{R}^{m\times n}\) is the sensing matrix characterizing the {acquisition} system, and \(\mathbf{e} \in \mathbb{R}^{m}\) is additive sensor noise. To estimate the original image \(\mathbf{x}\) from measurements \(\mathbf{y}\),  variational formulations aim to solve the following optimization problem:
\vspace{-0.2cm}
\begin{equation}
\vspace{-0.1cm}
\label{eq:unfolding_optimization}
    \hat{\mathbf{x}} = \argmin_{\mathbf{x}} g(\mathbf{x}) + \lambda h(\mathbf{x}),
\end{equation}
where \(g(\mathbf{x})\) is the data fidelity term, and \(h(\mathbf{x})\) acts as a regularizer, embedding prior knowledge about the image structure. The parameter \(\lambda > 0\) controls the strength of the regularization. Unfolding methods integrate iterative optimization algorithms directly into neural network architectures, where each iteration corresponds to a neural network layer with trainable parameters \cite{unrolling_review}. The proximal step in classical methods is replaced by a neural network to learn task-specific image priors from data effectively.  Without loss of generality, we consider projected gradient descent-based unrolling with the fidelity term \(g(\mathbf{x})=\frac{1}{2}\|\mathbf{y}-\mathbf{H}\mathbf{x}\|_2^2\), the unfolding step at iteration \(k\) can be formulated as:
\vspace{-0.2cm}
\begin{equation}
\label{eq:unfolding_step}
    \small{\mathbf{x}^{k} = \mathcal{D}_{\theta^k}\left(\mathbf{x}^{k-1} - \tau^k \overbrace{\mathbf{H}^\top(\mathbf{H}\mathbf{x}^{k-1}-\mathbf{y})}^{\nabla g(\mathbf{x}^{k-1})}\right) \coloneqq T_{\Omega^k}(\mathbf{x}^k)},\vspace{-0.35cm}
\end{equation}
where \(\mathcal{D}_{\theta^k}\) represents a neural network parameterized by \(\theta^k\), \(\tau^k > 0\) denotes the $k-th$ step size with $K$ is the {maximum} number of iterations/stages and $\Omega^k = \{\theta^k, \tau^k\}$ are the $k-th$ trainable parameters of the unfolding algorithm with $\mathbf{x}^0 = \mathbf{H}^\top \mathbf{y}$. In practice, we add an acceleration step, which improves convergence of the algorithm as the one in FISTA \cite{FISTA}. The unfolding algorithm is trained as follows:
\vspace{-0.3cm}
\begin{equation}
\label{eq:training_problem}
\hat{\Omega} = \argmin_{\Omega} \sum_{p=1}^{P}\mathcal{L}_{\texttt{rec}}\left(\mathbf{x}_p , T_{\Omega^K}(\dots T_{\Omega^1}(\mathbf{H}^\top\mathbf{y}_p))\right),\vspace{-0.3cm}
\end{equation}
where $\mathcal{L}_{\texttt{rec}}(\cdot,\cdot)$ is a reconstruction loss function such as MSE or $\ell_p$-norm, \(\mathbf{x}_p\) and \(\mathbf{y}_p\) are the training pairs of ground-truth images and measurements, respectively. Training these unfolding architectures involves learning the parameters \({\mathbf{\Omega}} = \{\Omega_1, \dots, \Omega_K\}\), enabling the model to implicitly capture complex, data-driven priors tailored for specific inverse problems.
\section{Fidelity Homotopy for Unrolling Algorithm}
Here we describe UTOPY, a progressive fidelity approach that leverages a homotopy continuation method to effectively train unrolling algorithms by smoothly transitioning from an easier fidelity constraint to the challenging fidelity constraint. 
\subsection{Homotopy Design}\vspace{-0.2cm}
First, let denote $\mathbf{y}_t = \mathbf{H}_t\mathbf{x} \in \mathbb{R}^m_t$, $m_t\geq m$,  a synthetic measurement set (used only during training) which is related to the real model but is better-posed. We consider two $\mathbf{H}_t$ designs for the applications:

\textbf{Compressed sensing} Here we consider the case where $\mathbf{H}= \mathbf{PW}$, where $\mathbf{W}\in\mathbb{R}^{n\times n}$ is an orthogonal transformation e.g., Hadamard or Fourier, and $\mathbf{P}\in\mathbb{R}^{m\times n}$ is a subsampling and permutation matrix that selects few rows of the transformation matrix. Thus, we consider $\mathbf{H}_t = \mathbf{P}_t\mathbf{W}$ where $\mathbf{P}_t \in \mathbb{R}^{m_t\times n}$, selects more rows from $\mathbf{W}$ than $\mathbf{P}$. In this setting, we define the hyperparameter $\eta = 1-m/n$ as the augmented ratio.

\textbf{Image Deblurring} The forward model $\mathbf{H}$ is built upon a Toeplitz matrix based on the convolution kernel denoted as  $\mathbf{H}[i,i+j] = \mathbf{h}[j]$  with $i=1,\dots, m$ and $j=1,\dots,n$, where $\mathbf{h}$ is the vectorized Gaussian kernel with $\sigma$ standard deviation. Thus, we consider $\mathbf{H}_t[i,i+j] = \mathbf{h}_t[j]$ with the kernel $ \mathbf{h}_t$ has a standard deviation $\sigma_t < \sigma$. Note that in this case $m_t =m$, which depends on the employed padding scheme. The implementation contains zero-padding such that $n=m_t=m$.

Define the family of fidelity functions $g_\alpha(\mathbf{x)} = (1 - \alpha) \frac{1}{2}\| \mathbf{y} - \mathbf{H}\mathbf{x} \|^2_2 + \alpha  \frac{1}{2}\| \mathbf{y}_t - \mathbf{H}_t \mathbf{x} \|^2_2$, where \(\alpha\) is a homotopy parameter controlling the transition from a simpler fidelity \((\mathbf{H}_t, \mathbf{y}_t)\) towards the target fidelity \((\mathbf{H}, \mathbf{y})\). The continuation path is designed so that $\alpha: 1 \rightarrow 0$ during the training. Then, the fidelity homotopy optimization is expressed as follows:
\vspace{-0.1cm}
\begin{equation}
\vspace{-0.1cm}
\label{eq:fidelity_homotopy}
\hat{\mathbf{x}}_\alpha= \arg\min_{\mathbf{x}} g_\alpha(\mathbf{x})+ \lambda h(\mathbf{x}),
\end{equation}
then, the unrolling network for this optimization problem becomes\vspace{-0.3cm}
\begin{equation}
    \mathbf{x}_\alpha^k = \mathcal{D}_{\theta^k}\left(\mathbf{x}_\alpha^{k-1} - \tau^k {\nabla g_\alpha(\mathbf{x}_\alpha^{k-1})}\right), \text{ where }
\end{equation}
\begin{equation}
   \small{ {\nabla g_\alpha(\mathbf{x}) = \frac{1}{2}\alpha\mathbf{H}_t^\top (\mathbf{H}_t\mathbf{x} - \mathbf{y}_t) + \frac{1}{2}(1-\alpha)\mathbf{H}^\top(\mathbf{Hx} -\mathbf{y})}.}\vspace{-0.3cm}
\end{equation}
\vspace{-0.5cm}
\subsection{ Fidelity Homotopy for Unrolling Algorithm Optimization}
\vspace{-0.1cm}
\begin{algorithm}[!t]
\caption{Fidelity Homotopy for Unrolling Algorithm }
\label{alg:fidelity_optimization}
\begin{algorithmic}[1]\small{
\Require \(\mathbf{H}_t,\mathbf{H}\), \(\mathbf{y}_t,\mathbf{y}\), \(\texttt{max\_epochs}, \epsilon, \texttt{freq}\)
\State Initialize network weights \({\mathbf{\Omega}} = \{\Omega_1,\dots,\Omega_K\}\) and set \(\alpha = 1\)
\For{\(\ell = 1:\texttt{max\_epochs}\)}
    \State if \((\ell \mod{\texttt{freq}} = 0)\), then update \(\alpha = \texttt{scheduler}_\epsilon(\ell),\) \Comment{Update $\alpha$ according scheduling}
    \State \(\mathbf{x}_\alpha^0 = \mathbf{z}_\alpha = \alpha\mathbf{H}_t^\top\mathbf{y}_t + (1-\alpha)\mathbf{H}^\top \mathbf{y}\) \Comment{Algorithm initialization}
    \For{\(k=1:K\)}
        \State $\mathbf{x}^{k+1}_\alpha =\mathcal{D}_{\theta^k}\left(\mathbf{z}_\alpha - \tau_k \nabla_\mathbf{x} g_{\alpha}(\mathbf{z}_\alpha)\right)$ \Comment{Projected gradient descent step}

        \State $\mathbf{z}_\alpha = \mathbf{x}^{k+1}_\alpha  - t^k( \mathbf{x}^{k+1}_\alpha -  \mathbf{x}^{k}_\alpha)$ \Comment{Acceleration Step}
    \EndFor
    \State Compute loss \(\mathcal{L} = \mathcal{L}_{\texttt{rec}}( \mathbf{x},{\mathbf{x}}_\alpha^K)\)
    \State Update weights \(\Omega = \Omega - \zeta \nabla_\Omega \mathcal{L}\)
\EndFor
}\vspace{-0.3cm}
\boxedthm{\textbf{Note:} During inference (testing), set \(\alpha = 0\) exclusively.}
\end{algorithmic}
\end{algorithm}

{Algorithm~\ref{alg:fidelity_optimization} details the fidelity homotopy, starting with network parameters initialization}. {Parameter} $\alpha$ is initialized to $1$, indicating the initial focus is on the simpler fidelity term ($\mathbf{H}_t$, $\mathbf{y}_t$), with a better-posed measurement operator. At each epoch $\ell$, the value of $\alpha$ is updated following $\texttt{scheduler}_\epsilon(\ell)$.  which are two: 
{\begin{equation}
    \overset{\text{Exponential}}{\texttt{scheduler}_\epsilon(\ell) = e^{-\epsilon \ell}},\; \overset{\text{Linear}}{
   \texttt{scheduler}_\epsilon(\ell) = 1-\ell/\epsilon},
\end{equation}}
where $\epsilon>0$ is a parameter controlling the speed of the decreasing values of $\alpha$. The exponential scheduling allows faster neural network convergence, but the constant decrease of the linear scheduling allows better performance as shown in Section \ref{sec:results} The initial estimate for the image at each epoch, $\mathbf{x}_\alpha^0$, is computed as a convex combination of adjoint reconstructions from both fidelity terms: $\mathbf{x}^0_\alpha = \alpha \mathbf{H}_t^\top \mathbf{y}_t + (1 - \alpha) \mathbf{H}^\top \mathbf{y}.$  In the inner loop, the algorithm updates the reconstruction iteratively through learned proximal operations. Finally, during inference, the parameter $\alpha$ is fixed to $0$, ensuring the neural network only uses the target fidelity.

\vspace{-0.2cm}
\subsection{Theoretical Analysis}
\label{sec:theory_analysis}
\vspace{-0.2cm}
\noindent We make the following assumptions:
(\textbf{A1}) ${\nabla g_{\alpha}}$ is $L_{\alpha}$-Lipschitz and continuously differentiable in $(\mathbf{x},\alpha)$ with $L=\sup_{\alpha}L_{\alpha}<\infty$.   
(\textbf{A2a}) 
The proximal network $\mathcal{D}_{\theta}\!:\mathbb{R}^{n}\!\to\!\mathbb{R}^{n}$ is built from affine layers and pointwise $\mathcal{C}^{1}$ activations (e.g.\ Swish, GELU, soft-ReLU).  The chain rule implies $\mathcal{D}_{\theta}$ itself is differentiable everywhere.  (\textbf{A2b}) The network $\mathcal{D}_{\theta}$ is $\beta$-Lipschitz with $\beta<1$ implying  $\| \mathcal{D}_{\theta}(\mathbf{u})-\mathcal{D}_{\theta}(\mathbf{v}) \|_{2}
    \le \beta\|\mathbf{u}-\mathbf{v}\|_{2},$ This condition is commonly adopted and validated empirically in unrolling literature \cite{kofler2023learning, scarlett2023theoretical}
With the step size $0<\tau<(1-\beta)/\beta L$, the unrolled operator is
\begin{equation}
\label{eq:iteration_operator}
    T_{\alpha}(\mathbf{x})
    = \mathcal{D}_{\theta}\!\bigl(\mathbf{x}-\tau\nabla g_{\alpha}(\mathbf{x})\bigr).
\end{equation}

\begin{theorem}[Smooth path of unrolled solutions]
\label{thm:smooth_path_neural}
Under \textbf{A1}–\textbf{A2} the fixed-point equation $\mathbf{x}=T_{\alpha}(\mathbf{x})$ admits a unique solution $\widehat{\mathbf{x}}_{\alpha}=\operatorname*{fix}(T_{\alpha})$ for every $\alpha\in[0,1]$.  The mapping $\alpha\mapsto\widehat{\mathbf{x}}_{\alpha}$ lies in $\mathcal{C}^{1}[0,1]$ and obeys
\vspace{-0.1cm}
\small{\begin{equation}
\vspace{-0.1cm}
\label{eq:lip_bound_theorem}
    \|\widehat{\mathbf{x}}_{\alpha_{1}}-\widehat{\mathbf{x}}_{\alpha_{2}}\|_{2}
    \le \frac{\tau L}{1-\beta(1-\tau L)}\,|\alpha_{1}-\alpha_{2}|,
    \quad \forall\,\alpha_{1},\alpha_{2}\in[0,1].
\end{equation}}
\end{theorem}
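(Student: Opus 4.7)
My plan is to decompose the claim into three parts: existence and uniqueness of $\widehat{\mathbf{x}}_\alpha$ for each $\alpha\in[0,1]$, $\mathcal{C}^{1}$ regularity of the path $\alpha\mapsto\widehat{\mathbf{x}}_\alpha$, and the explicit Lipschitz estimate \eqref{eq:lip_bound_theorem}. The engine is Banach's fixed-point theorem followed by the Implicit Function Theorem, with the final bound coming from implicit differentiation of the fixed-point identity.

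First, I would establish that $T_\alpha$ is a strict contraction on $\mathbb{R}^{n}$ for every $\alpha\in[0,1]$. By \textbf{A1} the gradient step $\mathbf{x}\mapsto \mathbf{x}-\tau\nabla g_\alpha(\mathbf{x})$ is Lipschitz with constant at most $1+\tau L$ via the triangle inequality. Composing with the $\beta$-Lipschitz network from \textbf{A2b} yields $\mathrm{Lip}(T_\alpha)\le \beta(1+\tau L)$, which the step-size choice $\tau<(1-\beta)/(\beta L)$ drives strictly below one. Banach's theorem then produces a unique $\widehat{\mathbf{x}}_\alpha=\mathrm{fix}(T_\alpha)$ for every $\alpha$.

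Next, I would apply the Implicit Function Theorem to $F(\mathbf{x},\alpha):=\mathbf{x}-T_\alpha(\mathbf{x})$. Thanks to \textbf{A1} (joint $\mathcal{C}^{1}$ dependence of $\nabla g_\alpha$) and \textbf{A2a} ($\mathcal{D}_\theta$ built from $\mathcal{C}^{1}$ ingredients), $F$ is $\mathcal{C}^{1}$ in both arguments. Since $\|\partial_{\mathbf{x}} T_\alpha\|_{\mathrm{op}}\le \mathrm{Lip}(T_\alpha)<1$, the Jacobian $\partial_{\mathbf{x}} F=I-\partial_{\mathbf{x}} T_\alpha$ is invertible via the Neumann series, with $\|(\partial_{\mathbf{x}} F)^{-1}\|_{\mathrm{op}}\le 1/(1-\mathrm{Lip}(T_\alpha))$. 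The IFT produces a local $\mathcal{C}^{1}$ branch around each $\alpha_0\in[0,1]$, and uniqueness of the fixed point glues these local branches into a single $\mathcal{C}^{1}$ map on the whole interval; the one-sided derivatives at the endpoints are handled by extending $\alpha$ to a slightly larger open interval, which is legitimate since $g_\alpha$ is smooth in $\alpha$.

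Finally, implicit differentiation of $F(\widehat{\mathbf{x}}_\alpha,\alpha)=0$ gives $d\widehat{\mathbf{x}}_\alpha/d\alpha = (I-\partial_{\mathbf{x}} T_\alpha)^{-1}\,\partial_\alpha T_\alpha$. The inverse factor is controlled by the Neumann bound above, while the chain rule yields $\partial_\alpha T_\alpha(\mathbf{x}) = -\tau\, D\mathcal{D}_\theta\, \partial_\alpha \nabla g_\alpha(\mathbf{x})$, whose norm is in turn bounded using $\|D\mathcal{D}_\theta\|_{\mathrm{op}}\le\beta$ (from \textbf{A2b}) and $\|\partial_\alpha \nabla g_\alpha(\mathbf{x})\|\le L$ (from the joint Lipschitz clause of \textbf{A1} applied at fixed $\mathbf{x}$). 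Multiplying the two factors and integrating along the segment $[\alpha_1,\alpha_2]$ yields the linear control claimed in \eqref{eq:lip_bound_theorem}. The main point of care is converting the \emph{joint} Lipschitz hypothesis in \textbf{A1} into the pointwise $\alpha$-slice bound on $\partial_\alpha \nabla g_\alpha$, and confirming that the local IFT branches patch consistently at the endpoints; both are routine once uniqueness of the fixed point is available.
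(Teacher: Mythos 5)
Your proposal follows essentially the same route as the paper: Banach's fixed-point theorem via the composite Lipschitz bound $\mathrm{Lip}(T_\alpha)\le\beta(1+\tau L)<1$, the Implicit Function Theorem applied to $\mathbf{x}-T_\alpha(\mathbf{x})=0$ for $\mathcal{C}^1$ regularity, and implicit differentiation plus integration for the quantitative estimate; your treatment of the endpoint derivatives and of the Neumann-series bound on $(I-\partial_{\mathbf{x}}T_\alpha)^{-1}$ is, if anything, more carefully stated than the paper's.

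One discrepancy worth flagging: the chain of bounds you describe yields $\|d\widehat{\mathbf{x}}_\alpha/d\alpha\|\le \beta\tau L/\bigl(1-\beta(1+\tau L)\bigr)$, which is a valid Lipschitz constant for the path but is \emph{not} the constant $\tau L/\bigl(1-\beta(1-\tau L)\bigr)$ appearing in \eqref{eq:lip_bound_theorem}, and in some parameter regimes it is strictly larger, so your argument does not literally deliver the stated inequality. The paper's own proof has the same internal inconsistency (it derives a contraction factor $\beta(1+\tau L)$, then asserts a smallest singular value of $1-\beta-\tau L$ and an inverse bound of $(1-\beta(1-\tau L))^{-1}$, which do not match), so this is a defect of the theorem's constant rather than of your reasoning; but if you want your write-up to be self-consistent you should either state the bound with the denominator $1-\beta(1+\tau L)$ or justify why the Jacobian $J_{\mathcal{D}_\theta}(I-\tau\nabla^2 g_\alpha)$ admits the sharper operator-norm bound $\beta(1-\tau L)$ (which would require additional spectral assumptions on $\nabla^2 g_\alpha$ beyond \textbf{A1}).
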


\begin{proof}
For any $\mathbf{u},\mathbf{v}$ the $\beta$-Lipschitz property and {Eq.} \eqref{eq:iteration_operator} give
\vspace{-0.1cm}
\begin{align*}
\vspace{-0.1cm}
\|T_{\alpha}(\mathbf{u})-T_{\alpha}(\mathbf{v})\|_{2}
&\le\beta\|\mathbf{u}-\mathbf{v}\|_{2}+\beta\tau L\|\mathbf{u}-\mathbf{v}\|_{2}=\beta(1+\tau L)\|\mathbf{u}-\mathbf{v}\|_{2}.
\end{align*}
Because $\tau<(1-\beta)/L$, the contraction factor $\rho=\beta+\tau L$ satisfies $\rho<1$. Now, let $G(\mathbf{x},\alpha)=\mathbf{x}-T_{\alpha}(\mathbf{x})$.  Since $\mathcal{D}_{\theta}$ and $\nabla g_{\alpha}$ are $\mathcal{C}^{1}$, so is $G$.  The corresponding Jacobian in $\mathbf{x}$ is
\vspace{-0.1cm}
\[
\vspace{-0.1cm}
\mathbf{J}_{\mathbf{x}}G=\mathbf{I}-\mathbf{J}_{\mathcal{D}_{\theta}}\bigl(\mathbf{I}-\tau\nabla^{2}g_{\alpha}\bigr),
\]
whose smallest singular value is at least $1-\beta-\tau L>0$; hence $\mathbf{J}_{\mathbf{x}}G$ is invertible.  The Implicit Function Theorem then yields a unique $\mathcal{C}^{1}$ curve $\alpha\mapsto\widehat{\mathbf{x}}_{\alpha}$ satisfying $G(\widehat{\mathbf{x}}_{\alpha},\alpha)=0$. Then, differentiating $G(\widehat{\mathbf{x}}_{\alpha},\alpha)=0$ with respect to $\alpha$ gives the linear relation $\mathbf{J}_{\mathbf{x}}G\,\dot{\widehat{\mathbf{x}}}_{\alpha}+\partial_{\alpha}G=0$.  Taking norms and using $\|\mathbf{J}_{\mathbf{x}}G^{-1}\|_{2}\le(1-\beta(1-\tau L))^{-1}$ and $\|\partial_{\alpha}G\|_{2}\le\tau L$ yields $\|\dot{\widehat{\mathbf{x}}}_{\alpha}\|_{2}\le\tau L/(1-\beta(1-\tau L))$.  Finally, for any \(\alpha_1,\alpha_2\in[0,1]\) apply the Fundamental Theorem of
Calculus: \vspace{-0.25cm}
\begin{align*}
     \|\widehat{\mathbf{x}}_{\alpha_1}-\widehat{\mathbf{x}}_{\alpha_2}\|_2
   &=\Bigl\|\int_{\alpha_2}^{\alpha_1}\dot{\widehat{\mathbf{x}}}_\tau\,d\alpha\Bigr\|_2
   \le
   \int_{\alpha_2}^{\alpha_1}\frac{\tau L}{1-\beta(1-\tau L)}d\alpha = 
   \frac{\tau L}{1-\beta(1-\tau L)}\,|\alpha_1-\alpha_2|.
\end{align*}

\noindent leads directly to the global bound {Eq.} \eqref{eq:lip_bound_theorem}.
\end{proof}

\begin{remark}The results imply that the difference between the unrolling solutions along the continuation path is bounded up to a constant of the difference between the value of $\alpha$.
\end{remark}

\begin{figure*}[!t]
    \centering
    \includegraphics[width=\linewidth]{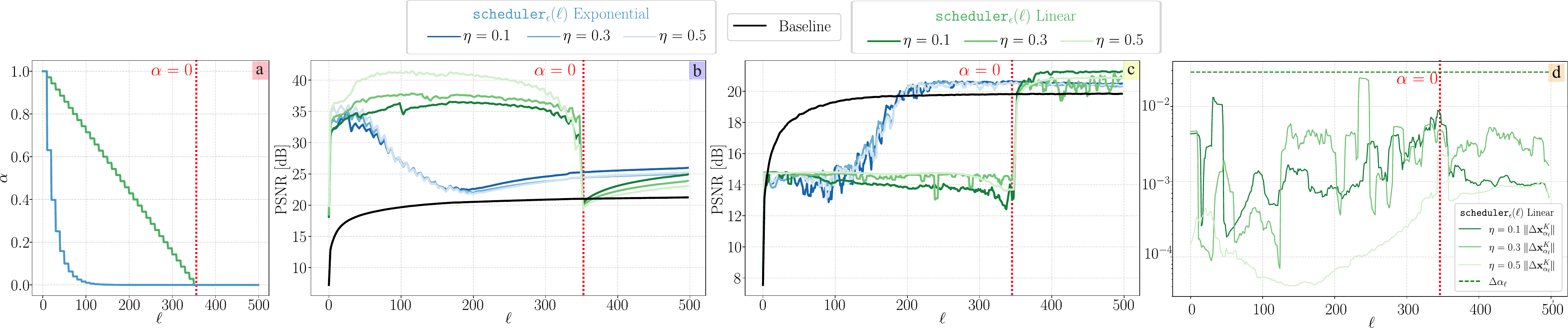}
    \vspace{-0.6cm}
    \caption{(a) Exponential and linear homotopy‐continuation schedules $\alpha(\ell)$ over training epochs, updating the $\alpha$ every 10 epochs until reach $\alpha=0$ at epoch 350. (b) Training PSNR per epoch for each scheduler and $\eta$, compared against the baseline unrolled network trained with constant $\alpha=0$ for the CS experiment. (c) Testing PSNR on the target fidelity $(\mathbf{H},\mathbf{y})$: although the baseline converges fastest, the proposed homotopy method surpasses the baseline’s final PSNR in all cases and for both schedulers. (d) Experimental validation of Theorem \ref{thm:smooth_path_neural}.}
    \label{fig:UTOPY_SchTrainTest}
    \vspace{-0.65cm}
\end{figure*}

\section{Results}\label{sec:results}

 The framework was implemented in PyTorch. For the network training, we used the CelebA dataset \cite{liu2015faceattributes}, using 24,318 images for training and 2,993 for testing. All images were resized to $64\times 64$, thus $n=4096$ and converted to grayscale. During training, additive white Gaussian noise was added to the measurements with SNR = 35[dB]. We used the Adam optimizer \cite{adam} with a learning rate $\zeta = 1e^{-5}$, which was halved three times during the network training. We used a composite loss function 
 \begin{align*}\mathcal{L}_{\texttt{rec}} (\mathbf{a},\mathbf{b}) = 0.8\times\Vert\mathbf{a-b}\Vert_1 + 0.2 \times (1-\texttt{SSIM}(\mathbf{a,b})) + 0.02\times \Vert\mathbf{w} \odot (\vert \mathbf{Fa-Fb} \vert) \Vert,\end{align*} where $\mathbf{F}$ is the Fourier transform and $\mathbf{w}$ is a circular window, band passing the high frequencies. This loss function is inspired by the focal frequency loss \cite{jiang2021focal} and the loss function proposed in \cite{loss}. The weights of each term were selected by grid search on the unrolling baseline training. For the implementation, we followed the library \texttt{pycolibri} \cite{colibri}. The baseline consisted of the classical training of unrolling networks, i.e., $\alpha=0 \forall \ell$.  The network $\mathcal{D}_{\theta^k}$
was a UNet model \cite{unet} with features following $[32,64,128,256]$, and each level has a double block convolution, batch normalization, and ReLU. We used $K=5$ unrolling stages, the parameters were initialized $\tau^k = 1e^{-3}\forall k$. The acceleration parameter $t^k$ of Algorithm \ref{alg:fidelity_optimization}  is trainable. For the linear scheduler $\epsilon = \frac{1}{0.7\texttt{max\_epochs}}$ and for the exponential case we set $\epsilon = \log\left(\frac{1e^{-8}}{0.7\texttt{max\_epochs}}\right)$. The value 0.7 was set such that we achieve $\alpha=0$ at 70\% of the training so that the last epochs focus on the target inverse problem $(\mathbf{H,y})$.

\begin{figure}[!t]
    \centering
    \includegraphics[width=0.6\linewidth]{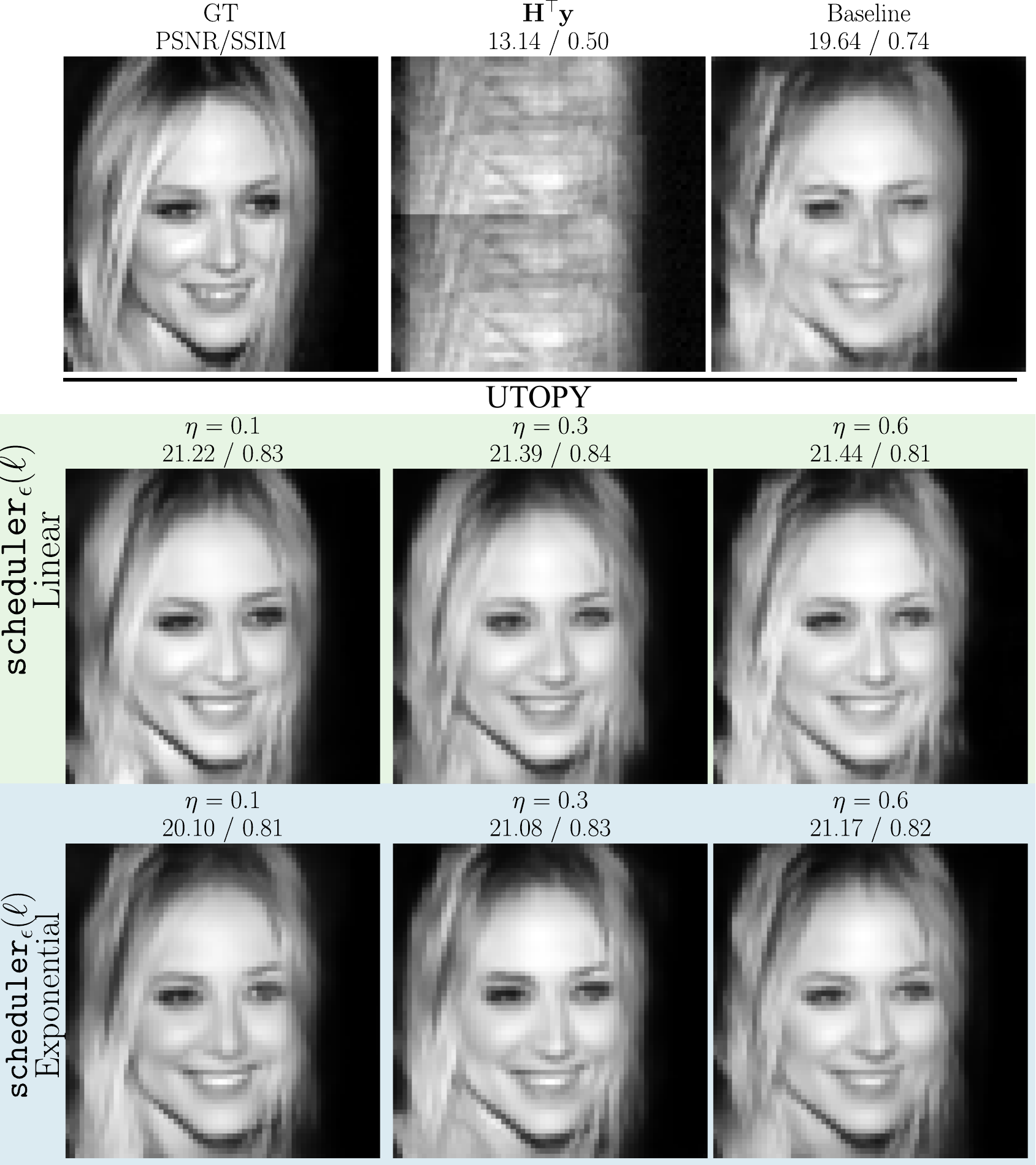}
    \caption{Visual reconstruction for SPC experiment with different schedulers and the baseline unrolling model.}
    \label{fig:visual_spc}
\end{figure}

 \noindent\textbf{Compressed sensing: } To validate the proposed approach in single-pixel imaging \cite{hu2019single,duarte2008single} using Hadamard coded apertures using the implementation of \cite{monroy2024hadamard}.  We set the compression ratio for all experiments, i.e., $m/n = 0.3$. We used  $\texttt{max\_epochs} = 500$ and a batch size of 32. Fig. \ref{fig:UTOPY_SchTrainTest}(a) shows both schedulers. We set $\texttt{freq} = 10$. Fig. \ref{fig:UTOPY_SchTrainTest}(b) reports the averaged training PSNR: early in training, all homotopy variants quickly gain PSNR by solving the simpler $(\mathbf{H}_t,\mathbf{y}_t)$ task. As $\alpha \to 0$, the PSNR temporarily declines, which reflects the shift to the ill‐posed fidelity, before recovering, demonstrating effective transition. The baseline (constant $\alpha=0$) shows faster convergence but lower performance. Fig. \ref{fig:UTOPY_SchTrainTest}(c) presents the testing PSNR on true measurements $\mathbf{y}$: the baseline converges faster due to its constant focus on the target fidelity, but the proposed homotopy‐trained models achieve higher final PSNR across all $\eta$ values and both scheduler types, confirming that homotopy fidelity continuation yields superior generalization in the inverse problem. Additionally, we validate experimentally the claims in Theorem \ref{thm:smooth_path_neural}. In which dotted tile $\Delta \alpha_\ell = |\alpha_{\ell}-\alpha_{\ell-1}|$ is the difference between the changes in $\alpha$ (in this case a constant as the scheduler was linear), and $ \Vert\Delta \mathbf{x}_\alpha ^\ell\Vert = \Vert\mathbf{x}_\alpha ^\ell-\mathbf{x}_\alpha ^{\ell-1}\Vert$ are the changes in the solutions of the unrolling algorithm. Figure \ref{fig:UTOPY_SchTrainTest}d) shows that along the network training, the changes in the solutions are bounded by the changes in the $\alpha$ parameter.  In Fig. \ref{fig:visual_spc}, the proposed method is shown with both schedulers at different settings of $\eta$, showing better reconstructions than base unrolling training.
\begin{figure}[!t]
    \centering
    \includegraphics[width=0.6\linewidth]{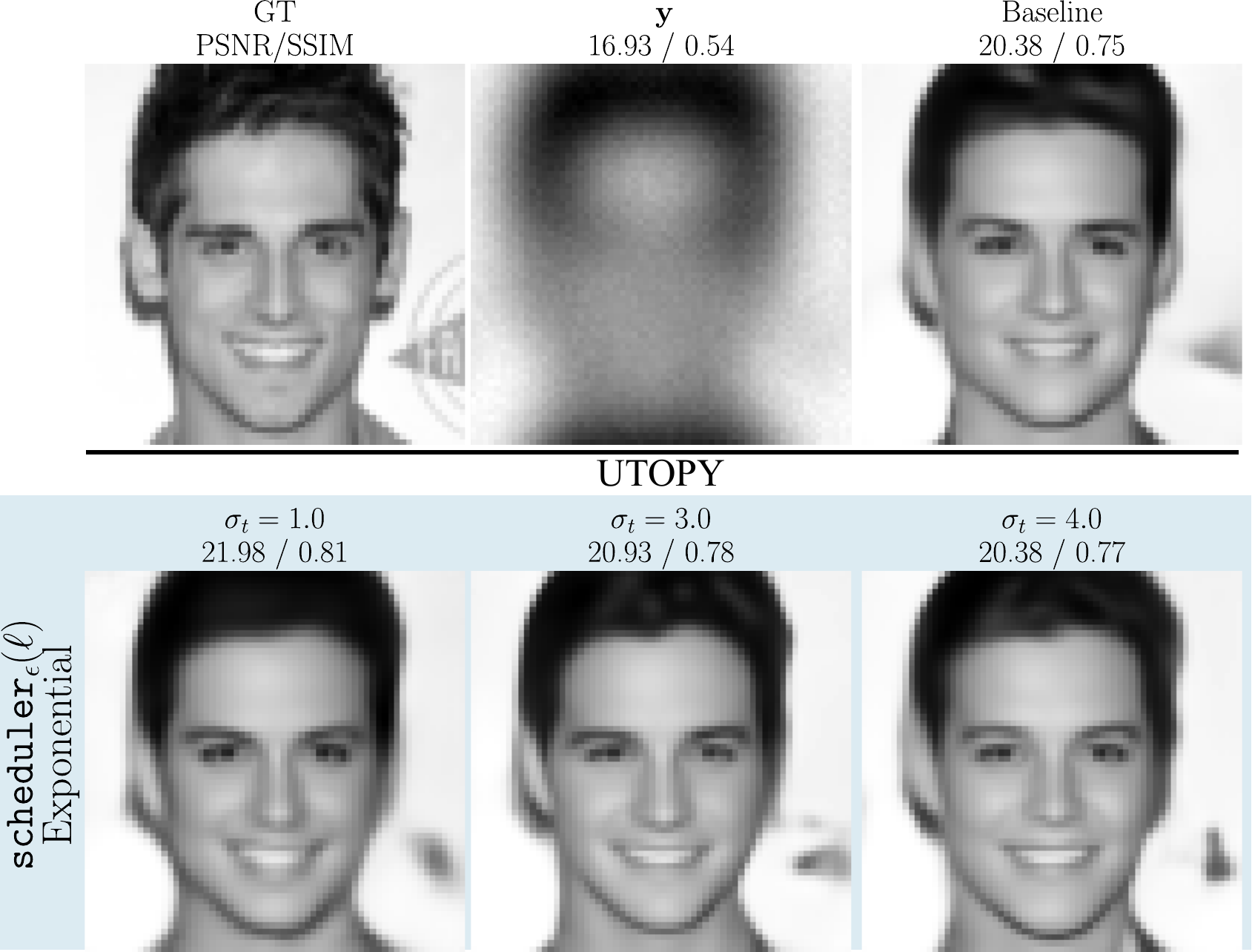}
    \caption{Visual reconstruction for image deblurring using exponential UTOPY and the baseline unrolling model.}
    \label{fig:visual_deblur}    
\end{figure}
\noindent\textbf{Image deblurring:} We set for these experiments $\sigma=5$, used a batch size of 100, and $\texttt{max\_epochs} = 200$. For the setting of UTOPY, we set $\sigma_t = [1,3,4]$ for the synthetic well-posed sensing matrix $\mathbf{H}_t$. For proof of concept of the method in this task, we only used the exponential scheduler. The results in Figure \ref{fig:visual_deblur} show the reconstruction performance for the proposed method and baseline unrolling. Although the measurements are highly degraded, the proposed method achieves high-quality reconstruction up to 1.6 dB.

\noindent\textbf{Generalization of changes in signal model:} Another advantage of the proposed approach is that, since it uses different fidelity functions along the continuation path, it resembles to training with several sensing matrices, achieving generalization to changes in the sensing model beyond $(\mathbf{H,y})$. We change the compression ratio in the CS experiment with $[0.25,0.35]$ and the SNR $[30, 40]$ in the deblurring experiment. In Table \ref{tb:table_generaliztion}, the performance for the CelebA test set is shown. For UTOPY models,  we set the one with the linear scheduler and $\eta=0.1$, and for deblurring, the one with $\sigma_t = 1$. We also compared with the FISTA plug-and-play algorithm with DnCNN denoiser \cite{kamilov2023plug,zhang2021plug} using the trained model in \texttt{DeepInverse} library \cite{tachella2025deepinverse}. The results show that UTOPY delivers consistent improvements across all testing scenarios, showing the generalization advantage of the homotopy training.

%

\begin{table}[!t]
\centering
\resizebox{0.8\linewidth}{!}{\begin{tabular}{cccccccc}
\toprule
\multirow{2}{*}{\textbf{Inverse Problem}}
  & \multirow{2}{*}{\textbf{Testing Setting}}
  & \multicolumn{2}{c}{\textbf{Baseline}}
  & \multicolumn{2}{c}{\textbf{UTOPY}}  & \multicolumn{2}{c}{\textbf{FISTA-PnP}}\\
\cmidrule(lr){3-4}\cmidrule(lr){5-6}\cmidrule(lr){7-8}
  & & \textbf{PSNR} & \textbf{SSIM} & \textbf{PSNR} & \textbf{SSIM}& \textbf{PSNR} & \textbf{SSIM} \\
\midrule
\multirow{3}{*}{CS}
  & $m/n = 0.25$ & 14.97 & 0.525 & \colorbox{customteal}{\textbf{16.78}} & \colorbox{customteal}{\textbf{0.620 }}  & 14.55 &  0.525\\
  & \colorbox{customorange}{$m/n = 0.30$} & 19.75 & 0.706 & \colorbox{customteal}{\textbf{21.19}} & \colorbox{customteal}{\textbf{0.771}}&15.20& 0.493\\
  & $m/n = 0.35$ & 19.93 & 0.720 & \colorbox{customteal}{\textbf{21.40}} & \colorbox{customteal}{\textbf{0.784 }}&15.37&0.510\\
\midrule
\multirow{3}{*}{Deblurring}
  & SNR = 30     & 20.84 & 0.658 & \colorbox{customteal}{\textbf{21.54}} & \colorbox{customteal}{\textbf{0.693 }}& 19.85 & 0.596\\
  &  \colorbox{customorange}{SNR = 35}     & 21.17 & 0.660 & \colorbox{customteal}{\textbf{22.17}} & \colorbox{customteal}{\textbf{0.718}} & 19.906 & 0.598\\
  & SNR = 40     & 21.12 & 0.668 & \colorbox{customteal}{\textbf{22.38}} & \colorbox{customteal}{\textbf{0.726}} & 19.92&0.598\\
\bottomrule
\end{tabular}}
\vspace{-0.1cm}
\caption{Unrolling baseline, UTOPY, and FISTA-PnP performance across testing settings for each inverse problem.  \colorbox{customorange}{Setting used in training}. Best performance for each testing setting in \colorbox{customteal}{\textbf{bold teal}.} }\vspace{-0.6cm}
\label{tb:table_generaliztion}
\end{table}

\vspace{-0.1cm}
\section{Conclusion}
\vspace{-0.1cm}
\noindent We proposed UTOPY, a new approach to train unrolling networks that harness homotopy continuation path strategies to learn the reconstruction model from a well-posed (synthetic) inverse problem and smoothly transition towards the challenging target inverse problem. We show theoretically that the possible unrolling solutions have smooth transitions along the homotopy continuation path. The proposed method improves reconstruction performance and generalization upon traditional unrolling training and plug-and-play methods for both compressed sensing and image deblurring. 

\bibliographystyle{IEEEtran}
\bibliography{bibliography}

\end{document}